\newtheorem{theorem}{Theorem}[section]
\newtheorem{definition}{Definition}[section]
\newtheorem{lemma}[theorem]{Lemma}
\newtheorem{corollary}[theorem]{Corollary}
\newtheorem{remark}{Remark}[section]
\newcommand{\ma}{\mathcal}
\begin{document}

\title{New Bounds For Frameproof Codes}
\author{Chong Shangguan, Xin Wang, Gennian Ge and Ying Miao
\thanks{The research of G. Ge was supported by the National Natural Science Foundation of China under Grant No.~61171198 and Grant No.~11431003, and Zhejiang Provincial Natural Science Foundation of China under Grant No.~LZ13A010001.
The research of Y. Miao was supported by JSPS Grant-in-Aid for Scientific Research (C) under Grant No.~24540111.
}
\thanks{C. Shangguan is with the Department of Mathematics, Zhejiang University,
Hangzhou 310027, China (e-mail: 11235061@zju.edu.cn).}
\thanks{X. Wang is with the Department of Mathematics, Zhejiang University,
Hangzhou 310027, China (e-mail: 11235062@zju.edu.cn).}
\thanks{G. Ge is with the School of Mathematical Sciences, Capital Normal University,
Beijing 100048, China (e-mail: gnge@zju.edu.cn). He is also with Beijing Center for Mathematics and Information Interdisciplinary Sciences, Beijing, 100048, China.}
\thanks{Y. Miao is with the Faculty of Engineering, Information and Systems, University of Tsukuba,
Tsukuba, Ibaraki 305-8573, Japan (miao@sk.tsukuba.ac.jp).}
\thanks{Copyright (c) 2014 IEEE. Personal use of this material is permitted. However, permission to use this material for any other purposes must be obtained from the IEEE by sending a request to pubs-permissions@ieee.org.}
}
\maketitle

\begin{abstract}
Frameproof codes are used to fingerprint digital data. It can prevent copyrighted materials from unauthorized use. In this paper, we study upper and lower bounds for $w$-frameproof codes of length $N$ over an alphabet of size $q$. The upper bound is based on a combinatorial approach and the lower bound is based on a probabilistic
construction. Both bounds can improve previous results when $q$ is small compared to $w$, say $cq\leq w$ for some constant $c\leq q$. Furthermore, we pay special attention to binary frameproof codes. We show a binary $w$-frameproof code of length $N$ can not have more than $N$ codewords if $N<\binom{w+1}{2}$.
\end{abstract}

\begin{keywords}
Fingerprinting, frameproof code, combinatorial counting, deletion method
\end{keywords}

\section{Introduction}
\label{intro}

Frameproof codes were first introduced by Boneh and Shaw \cite{Firstpaper} in the context of digital fingerprinting to prevent copyrighted materials from unauthorized use. A set of unique labels, known as digital fingerprints, are inserted into digital data before the distributor wants to sell the copies to different customers. The fingerprints can be viewed as decoder boxes or access control for some copyrighted materials. They are generally denoted as codewords in $\ma{A}^N$, where $\ma{A}$ is an alphabet of size $q$ and $N$ is a positive integer.
As long as a coalition of some users want to produce a pirate decoder, they can share and compare their copies. A set of fingerprints is called to be a $w$-frameproof code if any coalition of at most $w$ users can not frame another user not in this coalition.

\subsection{Previous work}
Consider a code $\mathcal{C}\subseteq\mathcal{A}^N$.
Without loss of generality, we set $\ma{A}=\{1,\ldots,q\}$ for $q\geq3$ through out this paper. When considering the binary case, we usually set $\ma{A}=\{0,1\}$. We call this code an $(N,n,q)$ code if $|\ma{C}|=n$. Each codeword $\vec{c}\in\ma{C}$ can be represented as $\vec{c}=(c_1,\ldots,c_N)$, where $1\leq c_i\leq q$ for all $1\leq i\leq N$. For any subset of codewords $D\subseteq\ma{C}$, we denote $desc_i(D)=\{c_i:\vec{c}\in D\}$ for every $1\leq i\leq N$.
The set of $descendants$ of $D$ is defined as
$$desc(D)=\{\vec{c}\in\ma{A}^N:c_i\in desc_i(D),~1\leq i\leq N\}.$$
Let $\ma{C}$ be an $(N,n,q)$ code and let $w\geq2$ be an integer. $\ma{C}$ is called a $w$-frameproof code if we have $desc(D)\cap\ma{C}=D$ for all $D\subseteq\ma{C}$ and $|D|\leq w$. In the literature, there are a lot of papers about the properties and applications of frameproof codes, see for example \cite{BL03}, \cite{Firstpaper},
\cite{shijie}, \cite{chenmq}, \cite{BN}, \cite{ST01}, \cite{applications}, \cite{propertyandconstruction}, \cite{inviewedofshf}.
It is worth mentioning that frameproof codes were widely considered having no traceability for generic digital fingerprinting. However, in \cite{chenmq} the authors showed that frameproof codes have very good traceability for multimedia fingerprinting.
There are also many objects related to frameproof codes, such as identifiable parent property codes \cite{IPP1}, \cite{IPP2}, traceability codes \cite{TA1} and separating hash families \cite{SHF1}, \cite{SHF2}, \cite{ST08}.

The determination of upper and lower bounds of frameproof codes is crucial problems in this research area.
For upper bounds, when $q\geq w$ many strong bounds can be found in \cite{BL03}, \cite{ST01} and \cite{inviewedofshf}. Let $M_{w,N}(q)$ be the largest cardinality of an $(N,n,q)$ $w$-frameproof code and let $R_{w,N}=\lim_{q\rightarrow\infty} M_{w,N}(q)/q^{\lceil N/w\rceil}$. It has been determined by Blackburn \cite{BL03} that $\lim_{q\rightarrow\infty}\log_q M_{w,N}(q)=\lceil N/w\rceil$, $R_{w,N}=1$ when $N\equiv1\mod w$, $R_{w,N}=2$ when $w=2$ and $N$ is even.
Existing constructions are usually based on some finite fields with cardinality larger than
$N$, in other words, the alphabet size $q$ is larger than the code length $N$. Notice that this setting does not work if one wants to know about the code rate defined as $\alpha_{w,q}=\lim_{N\rightarrow\infty}\frac{1}{N}\log_q M_{w,N}(q)$,
where $q$ is some given positive integer. When $q<w$, much less is known about the upper bounds.
In this paper we will concentrate on this theme. For lower bounds, in \cite{ST08}, Stinson, Wei and
Chen presented a general result by probabilistic method, where frameproof codes were viewed as a special type of separating hash families.

These bounds are restated as follows.

\begin{theorem}(\cite{BL03}) \label{upperbound1}
Let $t\in\{1,\ldots,w\}$ be an integer such that $t\equiv N\mod w$. If there exists an $(N,n,q)$ $w$-frameproof code, then
$$n\leq(\frac{N}{N-(t-1)\lceil N/w\rceil})q^{\lceil N/w\rceil}+\ma{O}(q^{\lceil N/w\rceil-1}).$$
\end{theorem}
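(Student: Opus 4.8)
The plan is to restate the frameproof property as a covering condition and then bound $n$ by attaching a small ``separating set'' to every codeword. Recall that $\ma{C}$ is $w$-frameproof if and only if, for every $\vec{c}\in\ma{C}$ and every $D\subseteq\ma{C}\setminus\{\vec{c}\}$ with $|D|\le w$, there is a coordinate $i$ with $c_i\neq d_i$ for all $\vec{d}\in D$; writing $A(\vec{d})=\{i:d_i=c_i\}$ for the set of coordinates where $\vec{d}$ agrees with a fixed $\vec{c}$, this is the same as saying that no $w$ of the sets $A(\vec{d})$, $\vec{d}\in\ma{C}\setminus\{\vec{c}\}$, cover $\{1,\dots,N\}$. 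The first step is the separating-set lemma: for every $\vec{c}\in\ma{C}$ there is a set $T\subseteq\{1,\dots,N\}$ with $|T|\le\lceil N/w\rceil$ on which $\vec{c}$ is \emph{unique}, meaning $\vec{d}|_T\neq\vec{c}|_T$ for every $\vec{d}\in\ma{C}\setminus\{\vec{c}\}$. To see this, split $\{1,\dots,N\}$ into $w$ blocks, $t$ of them of size $\lceil N/w\rceil$ and $w-t$ of size $\lfloor N/w\rfloor$; if $\vec{c}$ were non-unique on each block, then picking for every block a codeword agreeing with $\vec{c}$ there gives at most $w$ codewords whose agreement sets with $\vec{c}$ cover $\{1,\dots,N\}$, which frames $\vec{c}$.

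Since at most $q^{|B|}$ codewords can be unique on a fixed set $B$, charging each codeword to a block on which it is unique already yields $n\le t\,q^{\lceil N/w\rceil}+(w-t)\,q^{\lceil N/w\rceil-1}$, i.e.\ the asserted bound but with the weaker leading constant $t$ in place of $N/(N-(t-1)\lceil N/w\rceil)$ (one checks $t\ge N/(N-(t-1)\lceil N/w\rceil)$ always). To reach the sharp constant I would, rather than fixing one partition, attach to each $\vec{c}$ a separating set $T_{\vec{c}}$ of \emph{minimum} size $s(\vec{c})=|T_{\vec{c}}|\le\lceil N/w\rceil=:m$. Those $\vec{c}$ with $s(\vec{c})\le m-1$ number at most $\sum_{s<m}\binom{N}{s}q^{s}=\ma{O}(q^{m-1})$ and are discarded into the error term. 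For the remaining codewords $|T_{\vec{c}}|=m$, and minimality gives, for each $i\in T_{\vec{c}}$, a witness $\vec{d}^{(i)}\neq\vec{c}$ agreeing with $\vec{c}$ on $T_{\vec{c}}\setminus\{i\}$ but not at $i$; the coalition $\{\vec{d}^{(i)}:i\in T_{\vec{c}}\}$ already covers $T_{\vec{c}}$, so by frameproofness, together with any $w-m$ extra codewords it must leave some coordinate outside $T_{\vec{c}}$ uncovered. Feeding this back into the covering argument restricted to the $N-m$ coordinates outside $T_{\vec{c}}$ should restrict how these ``full-size'' codewords can distribute over the admissible sets $T_{\vec{c}}$, and optimizing the resulting weighted count is what I expect to replace the factor $t$ by $N/(N-(t-1)m)$.

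The reformulation and the separating-set lemma are routine; the main obstacle is this last step — turning the local structure forced around a minimum separating set into a clean global count with \emph{exactly} the constant $N/(N-(t-1)\lceil N/w\rceil)$, rather than the $t$ produced by a single balanced partition. Concretely, one must determine which coordinate subsets can serve as minimum separating sets in a $w$-frameproof code and how many codewords may be pinned to each, and then solve the ensuing extremal optimization; the codewords with a separating set of size below $\lceil N/w\rceil$, together with all other lower-order contributions, are absorbed into the additive $\ma{O}(q^{\lceil N/w\rceil-1})$ term.
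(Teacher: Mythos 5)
The statement you are proving is Blackburn's bound, which the paper only quotes from \cite{BL03} without proof, so your proposal must stand on its own --- and as written it does not reach the stated inequality. The first half is correct: splitting the coordinates into $t$ blocks of size $\lceil N/w\rceil$ and $w-t$ blocks of size $\lfloor N/w\rfloor$, observing that a codeword not unique on any block would be framed by the $\leq w$ codewords chosen to agree with it block by block, and noting that at most $q^{|B|}$ codewords can be unique on a fixed block $B$, gives $n\leq t\,q^{\lceil N/w\rceil}+(w-t)\,q^{\lceil N/w\rceil-1}$. But this is genuinely weaker than the theorem whenever $1<t<w$: as you note yourself, $t\geq N/(N-(t-1)\lceil N/w\rceil)$, with equality only at $t=1$ and $t=w$ (the inequality reduces to $(t-1)(w-t)\geq 0$), and the discrepancy $\bigl(t-\frac{N}{N-(t-1)\lceil N/w\rceil}\bigr)q^{\lceil N/w\rceil}$ is of the same order as the main term, so it cannot be absorbed into the additive $\ma{O}(q^{\lceil N/w\rceil-1})$.

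The second half, which is where the entire content of the theorem lies, is a plan rather than a proof. Discarding codewords whose minimum separating set has size $<\lceil N/w\rceil$ into the $\ma{O}(q^{\lceil N/w\rceil-1})$ term is fine, and the witnesses $\vec{d}^{(i)}$ extracted from minimality are correctly described, but the decisive step --- turning this local structure into a global weighted count whose optimum is exactly $N/(N-(t-1)\lceil N/w\rceil)$ --- is stated only as what you ``expect,'' and you flag it yourself as the main obstacle. Nothing in the proposal determines which $\lceil N/w\rceil$-subsets can serve as minimum separating sets, how many codewords may share one, or what extremal problem the frameproof condition actually imposes on them; consequently no constant better than $t$ is ever derived. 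In short, you have reproduced the standard argument that yields the right exponent $\lceil N/w\rceil$ with leading constant $t$, but the sharpened constant is precisely the nontrivial part of Blackburn's theorem, and it is missing here; the proposal should be regarded as proving a strictly weaker statement.
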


\begin{theorem} (\cite{ST08}) \label{generallowerbd} There exists an $(N,n,q)$ $w$-frameproof code provided that
$$n\leq (1-\frac{1}{w!})(\frac{q^w}{q^w-(q-1)^w})^{\frac{N}{w}}.$$
\end{theorem}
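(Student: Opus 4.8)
The plan is to establish this existence bound by the probabilistic deletion method: first select more random codewords than needed, then delete one codeword from every configuration that violates the frameproof property. Throughout I set $p=\left(\frac{q^w-(q-1)^w}{q^w}\right)^N$, so that the target cardinality is exactly $(1-\frac{1}{w!})\,p^{-1/w}$, matching the claimed $(1-\frac{1}{w!})\left(\frac{q^w}{q^w-(q-1)^w}\right)^{N/w}$.

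First I would reformulate the defining property coordinatewise. A codeword $\vec c$ lies in $desc(D)$ precisely when, at every coordinate $i$, the symbol $c_i$ equals $d_i$ for at least one $\vec d\in D$. Hence a code fails to be $w$-frameproof exactly when it contains a \emph{forbidden configuration}, that is, a pair $(\vec c,D)$ with $\vec c\notin D$, $|D|\le w$ and $\vec c\in desc(D)$. I would reduce to the case $|D|=w$: if $|D|<w$ one may enlarge $D$ to a $w$-set $D'$ avoiding $\vec c$ using any other surviving codewords, and since $desc_i(D)\subseteq desc_i(D')$ for every $i$ this yields a forbidden configuration with $|D'|=w$. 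So it suffices to destroy all configurations with $|D|=w$, provided the final code has more than $w$ codewords.

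Next I would draw $m$ codewords independently and uniformly at random from $\{1,\dots,q\}^N$ and estimate the probability of a fixed forbidden configuration. Fixing $\vec c$ and a $w$-set $D$ of other codewords, at each coordinate the probability that none of the $w$ codewords of $D$ matches $c_i$ is $\big(\tfrac{q-1}{q}\big)^w$, so the probability that $c_i\in desc_i(D)$ is $\tfrac{q^w-(q-1)^w}{q^w}$; since the $mN$ random symbols are independent, multiplying over the $N$ coordinates gives probability exactly $p$ that $\vec c\in desc(D)$. By linearity of expectation the expected number of forbidden configurations is at most $m\binom{m-1}{w}p\le \frac{m^{w+1}}{w!}\,p$, where I use $m\binom{m-1}{w}=\frac{m(m-1)\cdots(m-w)}{w!}$.

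Then comes the deletion step: fix an outcome whose number of forbidden configurations is at most this expectation, and delete the codeword $\vec c$ from each such configuration. What remains contains no forbidden configuration with $|D|=w$, hence (by the reduction above) none at all, so it is a $w$-frameproof code of size at least $m-\frac{m^{w+1}}{w!}\,p$. Finally I would tune the free parameter by taking $m=p^{-1/w}$, which makes $m^{w}p=1$ and collapses the bound to $m-\frac{m}{w!}=(1-\frac{1}{w!})\,p^{-1/w}$, exactly the asserted quantity. The only delicate points—and the main, essentially minor, obstacle—are the boundary bookkeeping: justifying the reduction to $|D|=w$ when few codewords survive, absorbing the negligible probability of repeated random codewords (distinct codewords among the draws), and rounding $m$ to the integer nearest $p^{-1/w}$ while verifying that the resulting guarantee still dominates $(1-\frac{1}{w!})\,p^{-1/w}$; each of these is routine.
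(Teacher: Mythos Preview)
Your proposal is correct and is precisely the approach the paper attributes to \cite{ST08}: the deletion method with each symbol drawn uniformly from $\{1,\dots,q\}$. The paper itself does not give a proof of this cited theorem; it only reproves a sharper variant (Theorem~\ref{low}) by the same deletion argument with a non-uniform symbol distribution, and explicitly remarks afterward that \cite{ST08} corresponds to the choice $p_1=\cdots=p_q=1/q$, which is exactly your computation.
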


\begin{remark} In the original paper \cite{ST08}, this bound is stated in the form of separating hash family. Note that a separating hash family of type $(N,n,q,\{1,w\})$ is equivalent to an $(N,n,q)$ $w$-frameproof code.
\end{remark}

In a recent paper of Guo, Stinson and Trung \cite{BN}, the authors paid particular attention to binary frameproof codes with small code length.
They showed that if there exits an $(N,n,2)$ $w$-frameproof code $\ma{C}$ with $w\geq3$ and $w+1\leq N\leq3w$, then it always holds that
$n\leq N$. The equality holds if and only if the representation matrix of $\ma{C}$ in standard form is a permutation matrix of degree $N$.

\begin{theorem}(\cite{BN}) \label{binarycase} For all $w\geq 3$ and for all $w+1\leq N\leq 3w$, an
$(N,n,2)$ $w$-frameproof code exists only if $n\leq N$.
If $n=N$, then the representation matrix in standard form must be equivalent to a permutation matrix of degree $N$.
\end{theorem}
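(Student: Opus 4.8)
The plan is to turn the frameproof condition into a statement about a set system and then induct on the length $N$. Recall first that permuting the rows or columns of the representation matrix, or complementing any single column, all preserve the $w$-frameproof property, and that $\ma{C}$ is $w$-frameproof if and only if for every $\vec{c}\in\ma{C}$ and every $D\subseteq\ma{C}\setminus\{\vec{c}\}$ with $1\le|D|\le w$ there is a coordinate at which $\vec{c}$ differs from all members of $D$. To prove $n\le N$ I would assume $n\ge N+1$, fix a codeword $\vec{c}_1$ and complement columns so that $\vec{c}_1=\vec{0}$. Setting $F_i=\{j:c_{i,j}=1\}$ for $2\le i\le n$, the code is encoded by the family $\ma{F}=\{F_2,\dots,F_n\}$ of $n-1\ge N$ distinct nonempty subsets of $\{1,\dots,N\}$ (distinctness is automatic over $\{0,1\}$, since a codeword is determined by the coordinates in which it differs from $\vec{c}_1$).

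Applying the characterization to $\vec{c}_1$ shows that $\ma{F}$ is $w$-wise intersecting: any $w$ of the $F_i$ share a coordinate. Applying it to $\vec{c}_i$ ($i\ge2$) with coalitions of the form $D=\{\vec{c}_1\}\cup E$, $|E|=w-1$, shows that $\ma{F}$ is $(w-1)$-cover-free: no $F_i$ is contained in the union of $w-1$ of the others. (Coalitions avoiding $\vec{c}_1$ give extra constraints that I would use only for the equality statement.) Next I would apply a transversal argument to a smallest set $F_{i_0}$: using $w$-wise intersectedness one shows that either (i) some coordinate $p$ lies in \emph{every} $F_i$, or (ii) $|F_i|\ge w$ for all $i$. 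In case (i), column $p$ is identically $1$ on $\ma{C}\setminus\{\vec{c}_1\}$, and deleting column $p$ together with the row $\vec{c}_1$ produces an $(N-1,n-1,2)$ code that is still $w$-frameproof — for any $\vec{c}_i$ and any coalition avoiding $\vec{c}_1$, the distinguishing coordinate can never be $p$ (all those codewords are $1$ there), so it survives. Since $n-1\ge N>N-1$, this contradicts the inductive hypothesis when $N\ge w+2$, and when $N=w+1$ it contradicts the direct fact that a $w$-frameproof code of length at most $w$ has at most $w$ codewords (any $w+1$ of its codewords are each distinguished from the coalition of the other $w$ by a coordinate, and those coordinates are pairwise distinct because $w+1\ge 4$).

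Case (ii) is the heart of the matter: every $F_i$ is large, so its complement has size at most $N-w\le 2w$, and I would aim for a contradiction by counting coordinates, pitting $(w-1)$-cover-freeness — equivalently, $F_i$ meets $\bigcap_{l\in S}\overline{F_l}$ for every $(w-1)$-subset $S\not\ni i$ — against $w$-wise intersectedness; the bound $N\le 3w$ enters exactly here (and sharpening this count is what would extend the conclusion to $N<\binom{w+1}{2}$). For the equality part I would rerun the argument with $n=N$ and track when the estimates are tight: tightness forces every column to have weight one after complementation and forces the map sending a codeword to its distinguishing coordinate to be a bijection, so that after permuting and complementing columns the representation matrix becomes $I_N$, a permutation matrix. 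The converse is clear, since the $i$-th coordinate of the unit vector $e_i$ separates it from all other basis vectors, so every permutation matrix is $w$-frameproof for every $w$.

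The step I expect to be the real obstacle is case (ii) together with the rigidity analysis for equality: $w$-wise intersectedness by itself puts no bound on $|\ma{F}|$ — all sets could share one coordinate, which is precisely case (i) — so one must genuinely combine it with $(w-1)$-cover-freeness and with the fact that $N$ is small relative to $w$ in order to force $n-1\le N-1$, and then to read off the extremal configuration. The bookkeeping in the induction near the base length $N=w+1$ is a further, if minor, irritation.
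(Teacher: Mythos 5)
Your reduction and your case (i) are sound and essentially reproduce the inductive ``peeling'' step of the paper: a coordinate $p$ lying in every $F_i$ means that $\vec{c}_1$ has a coordinate where its symbol is unique in the whole code (an own $1$-pattern in the paper's language), and deleting that coordinate together with $\vec{c}_1$ is exactly the row/column deletion the paper performs, with the same base case for length at most $w$. The genuine gap is case (ii), which you yourself flag but do not close: the entire content of the theorem for $w+1\leq N\leq 3w$ sits in showing that a family of $n-1\geq N$ sets on $[N]$ that is $w$-wise intersecting and $(w-1)$-cover-free with all sets of size at least $w$ cannot exist, and no counting argument is actually supplied. Worse, the hypothesis you reach in case (ii) is weaker than what the natural count needs: ``every codeword differs from the fixed $\vec{c}_1$ in at least $w$ positions'' does not preclude that some \emph{other} codeword still has a coordinate where it is the unique one with its symbol, and such codewords break the obvious union-counting (your dichotomy is tied to one distinguished codeword, whereas the correct dichotomy is ``some codeword has an own $1$-pattern'' versus ``none does''). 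The paper resolves precisely this point with Lemma \ref{mainlemma} at $t=1$: if $w+1$ codewords all lack own $1$-patterns, then adjoining them one by one each contributes at least $w-j+1$ new pairs $(i,c_i)$, so $2N\geq |\cup_{1\leq i\leq w+1}F_{\vec{c}_i}|\geq N+w+\binom{w}{2}=N+\binom{w+1}{2}$, which is the quantitative step you are searching for (and which is why the bound improves to $N<\binom{w+1}{2}$, not just $N\leq 3w$); the cited range $N\leq 3w$ for all $w\geq3$ requires either this or the more laborious case analysis of \cite{BN}.

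The equality statement suffers from the same omission: ``rerun the argument and track tightness'' has no content until the case (ii) count exists, since only then do you know that after finitely many peelings the residual code has at most $w$ codewords and that equality forces every peeled row to contain exactly one $1$, which is what identifies the standard-form matrix with a permutation matrix (the paper's Lemma \ref{2}, using that for $n\geq3$ two distinct codewords cannot share an own coordinate over a binary alphabet). So the architecture of your proposal matches the paper's, but the combinatorial core --- the counting that rules out a large set of codewords with no distinguishing coordinate of their own --- is missing, and as stated your case split does not quite set it up correctly.
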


\subsection{Main results}
In this paper, we mainly investigate the bounds for frameproof codes. We focus on the situation
when $q$ is relatively small compared to $w$. There are few results on this case in previous papers. We study the behavior of the code rate
$\alpha_{w,q}=\lim_{N\rightarrow\infty}\frac{1}{N}\log_q M_{w,N}(q)$
as $N$ approximates infinity. Our upper and lower bounds are both better than previous results when $w\geq cq$, where $c$ is some constant. Given an alphabet size $q$ and if $w$ satisfies the above condition, the upper bound of $\alpha_{w,q}$ can be improved from $\ma{O}(1/w)$ in \cite{BL03} to $\ma{O}(\log w/w^2)$. For the lower bound, we show there always exits a code with
$\alpha_{w,q}=\Omega(1/w^2)$. For binary $w$-frameproof code of length $N$, we prove that if $N<\binom{w+1}{2}$, then
it always holds that $n\leq N$, which is a significant improvement of Theorem \ref{binarycase}.

This paper is organized as follows. Section 2 is about some definitions and lemmas. In Section 3, we prove our upper bound by a combinatorial counting argument. In Section 4, we present a tight bound for binary $w$-frameproof code with code length bounded by $\binom{w+1}{2}$. For the lower bound, we present our probabilistic construction in Section 5. Section 6 is about some final discussions and problems.

\section{Preliminaries}
We will first give some definitions and notations. In Section~\ref{intro} we already
presented a definition of frameproof code in terms of the descendant set, which is the most common one in the literature. But in this paper we prefer an alternative definition in terms of distance.

\begin{definition} \label{newdefinition}
For an $(N,n,q)$ code $\ma{C}$, the distance between any codeword $\vec{c}\in\ma{C}$ and any collection of codewords $D\subseteq\ma{C}$ is defined as follows:
$$d(\vec{c},D)=|\{i:1\leq i\leq N,~c_i\not\in desc_i(D)\}|.$$
$\mathcal{C}$ is called a $w$-frameproof code if $d(\vec{c},D)>0$ for all $|D|\leq w$ and $\vec{c}\not\in D$.
\end{definition}

It is easy to verify that this definition is equivalent to the original one. It is interesting that our ideas in this paper are indeed implied in this distance-based definition. Recall that a codeword $\vec{c}\in\ma{C}$ is denoted as $\vec{c}=(c_1,\ldots,c_N)$, where $1\leq c_i\leq q$ for all $1\leq i\leq N$. We can associate each codeword $\vec{c}$ with a set of two tuples $F_{\vec{c}}=\{(i,c_i):1\leq i\leq N\}$. Note that $F_{\vec{c}_1}=F_{\vec{c}_2}$ if and only if $\vec{c}_1=\vec{c}_2$ and it always holds that $|F_{\vec{c}}|=N$. We call $F_{\vec{c}}$ the set corresponding to $\vec{c}$. It is easy to show $|\cup_{\vec{c}\in\ma{C}}F_{\vec{c}}|\leq qN$. The following lemma establishes an immediate connection between the distance defined in Definition \ref{newdefinition} and  the cardinality of $F_{\vec{c}}$.

\begin{lemma} \label{distanceandset}
Suppose $D\subseteq\ma{C}$ is a set of codewords. For every $\vec{c}\in\ma{C}$, let $F_{\vec{c}}=\{(i,c_i):1\leq i\leq N\}$ be its corresponding set. Then we have
$$d(\vec{c},D)=|F_{\vec{c}}\setminus\cup_{\vec{d}\in D}F_{\vec{d}}|.$$
\end{lemma}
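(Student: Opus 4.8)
The plan is to unwind both sides of the claimed identity and match them index by index through the natural bijection $i\mapsto(i,c_i)$ between $\{1,\dots,N\}$ and $F_{\vec c}$. First I would recall that $\vec c=(c_1,\dots,c_N)$ is fixed, so the map $\varphi:\{1,\dots,N\}\to F_{\vec c}$, $\varphi(i)=(i,c_i)$, is a bijection (it is surjective by definition of $F_{\vec c}$ and injective because the first coordinate already determines $i$); in particular $|F_{\vec c}|=N$, as already noted in the text. Thus counting a subset of $F_{\vec c}$ is the same as counting the set of indices $i$ whose image $\varphi(i)$ lies in that subset.

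Next I would characterize membership in the union: for a fixed $i$, the pair $(i,c_i)$ belongs to $\cup_{\vec d\in D}F_{\vec d}$ if and only if there is some $\vec d\in D$ with $(i,d_i)=(i,c_i)$, i.e. with $d_i=c_i$, which by the definition $desc_i(D)=\{d_i:\vec d\in D\}$ is exactly the statement $c_i\in desc_i(D)$. Equivalently, $(i,c_i)\in F_{\vec c}\setminus\cup_{\vec d\in D}F_{\vec d}$ precisely when $c_i\notin desc_i(D)$.

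Finally I would combine these two observations: applying the bijection $\varphi$ to the set $\{i:1\le i\le N,\ c_i\notin desc_i(D)\}$ yields exactly $F_{\vec c}\setminus\cup_{\vec d\in D}F_{\vec d}$, so the two sets have the same cardinality, which is the asserted equality $d(\vec c,D)=|F_{\vec c}\setminus\cup_{\vec d\in D}F_{\vec d}|$ in view of Definition~\ref{newdefinition}. Honestly, there is no real obstacle here: the only point requiring a moment's care is the observation that two pairs in $\cup_{\vec d\in D}F_{\vec d}$ sharing the same first coordinate $i$ need not come from the same $\vec d$, but since we only ever test equality of a single pair $(i,c_i)$ against the union this causes no difficulty. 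The whole argument is a direct translation of the distance definition into the language of the corresponding sets.
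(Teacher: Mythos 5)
Your proof is correct and follows essentially the same route as the paper: the paper's argument also rests on the observation that $c_i\in desc_i(D)$ if and only if $(i,c_i)\in F_{\vec d}$ for some $\vec d\in D$, and then counts indices. Your explicit bijection $i\mapsto(i,c_i)$ merely spells out the counting step the paper leaves implicit.
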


\begin{proof} Note that $d(\vec{c},D)=|\{i:1\leq i\leq N,~c_i\not\in desc_i(D)\}|$. The lemma follows from the simple fact that $c_i\in desc_i(D)$ if and only if $(i,c_i)\in F_{\vec{d}}$ for some $\vec{d}\in D$.
\end{proof}

\begin{remark}
One can verify that $\ma{C}$ is a $w$-frameproof code if and only if $|F_{\vec{c}}\setminus\cup_{\vec{d}\in D}F_{\vec{d}}|>0$ for all $D\subseteq\ma{C}$, $|D|\leq w$ and $\vec{c}\not\in D$ directly from Definition \ref{newdefinition}.
\end{remark}

For a set $S\subseteq\{1,\ldots,N\}$ and $|S|=t$, we call $T=\{(i,c_i): i\in S\}\subseteq F_{\vec{c}}$ a $t$-pattern of $\vec{c}$ (or equivalently, $F_{\vec{c}}$). $T$ is said to be an own $t$-pattern of $\vec{c}$
if $T\nsubseteq F_{\vec{d}}$ for all $\vec{d}\in\ma{C}\setminus\{\vec{c}\}$.
Denote $\mathcal{C}_t = \{\vec{c}\in\mathcal{C}: \vec{c}~\mbox{has~at~least~one~own}~t\textrm{-}{\mbox pattern}\}$
and $\mathcal{H}_t=\mathcal{C}-\mathcal{C}_t$. We have the following two easy lemmas.

\begin{lemma}\label{bdofct}
$|\mathcal{C}_t|\leq\binom{N}{t}q^t$.
\end{lemma}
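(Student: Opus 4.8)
The plan is to bound $|\mathcal{C}_t|$ by counting own $t$-patterns. The key observation is that every codeword in $\mathcal{C}_t$ carries, by definition, at least one own $t$-pattern, and distinct codewords cannot share the same own $t$-pattern: if $T$ were an own $t$-pattern of both $\vec{c}$ and $\vec{c}'$ with $\vec{c}\neq\vec{c}'$, then in particular $T\subseteq F_{\vec{c}'}$, contradicting the requirement that $T\nsubseteq F_{\vec{d}}$ for every $\vec{d}\in\mathcal{C}\setminus\{\vec{c}\}$. Hence the map sending each $\vec{c}\in\mathcal{C}_t$ to one of its own $t$-patterns is injective into the set of all $t$-patterns that occur among codewords of $\mathcal{C}$.

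So the remaining step is to count the total number of $t$-patterns. A $t$-pattern is determined by a choice of coordinate set $S\subseteq\{1,\ldots,N\}$ with $|S|=t$ together with a value in $\{1,\ldots,q\}$ assigned to each coordinate in $S$; there are $\binom{N}{t}$ choices for $S$ and $q^t$ choices for the values, giving at most $\binom{N}{t}q^t$ distinct $t$-patterns in total (this is already an overcount, since not every such pattern need be realized, but for an upper bound that is fine). Combining the injectivity with this count yields $|\mathcal{C}_t|\leq\binom{N}{t}q^t$.

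I do not anticipate a real obstacle here — the argument is a one-line pigeonhole once the non-sharing property of own $t$-patterns is stated. The only point requiring a moment's care is making sure the injection is well-defined (pick, say, the lexicographically smallest own $t$-pattern for each codeword) and that the uniqueness claim uses the ``for all $\vec{d}\in\mathcal{C}\setminus\{\vec{c}\}$'' quantifier in the definition of \emph{own} $t$-pattern rather than something weaker.
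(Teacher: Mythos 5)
Your argument is correct and is essentially the paper's own proof: the paper likewise observes that each $\vec{c}\in\mathcal{C}_t$ is uniquely identified by any of its own $t$-patterns, and then bounds the number of possible $t$-patterns by $\binom{N}{t}q^t$. Your added care about well-definedness of the injection and the quantifier in the definition of an own $t$-pattern is fine but not a departure from the paper's reasoning.
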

\begin{proof} Just notice that every codeword $\vec{c}\in\mathcal{C}_t$ is uniquely identified by its arbitrary own $t$-pattern.
\end{proof}

\begin{lemma}\label{mainlemma}
Suppose $\vec{c}\in\mathcal{H}_t$. Then for any $1\leq j\leq w$ and any distinct $\vec{c}_1,\vec{c}_2,\ldots,\vec{c}_j\in\mathcal{C}\setminus\{\vec{c}\}$. We have
$$d(\vec{c},\{\vec{c}_1,\vec{c}_2,\ldots,\vec{c}_j\})=|F_{\vec{c}}\setminus\cup_{1\leq i\leq j}F_{\vec{c}_i}|\geq(w-j)t+1.$$
\end{lemma}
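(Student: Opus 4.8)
The key hypothesis is that $\vec{c}\in\mathcal{H}_t$, meaning $\vec{c}$ has \emph{no} own $t$-pattern: every $t$-subset $T\subseteq F_{\vec{c}}$ is contained in $F_{\vec{d}}$ for some $\vec{d}\in\ma{C}\setminus\{\vec{c}\}$. The plan is to argue by contradiction on the size of the leftover set $G:=F_{\vec{c}}\setminus\bigcup_{1\le i\le j}F_{\vec{c}_i}$. Suppose $|G|\le(w-j)t$. I want to build, starting from $\vec{c}_1,\dots,\vec{c}_j$, a coalition of at most $w$ codewords from $\ma{C}\setminus\{\vec{c}\}$ whose corresponding sets cover all of $F_{\vec{c}}$; this contradicts the frameproof property (equivalently $d(\vec{c},D)>0$ for $|D|\le w$, via Lemma \ref{distanceandset}).

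The construction is greedy: repeatedly pick a $t$-subset of the still-uncovered portion of $F_{\vec{c}}$ and use the $\mathcal{H}_t$ hypothesis to find a codeword containing it, then add that codeword to the coalition. Concretely, as long as the uncovered part $G'$ of $F_{\vec{c}}$ has size at least $t$, choose any $t$-pattern $T\subseteq G'\subseteq F_{\vec{c}}$; since $\vec{c}\in\mathcal{H}_t$, there is $\vec{d}\in\ma{C}\setminus\{\vec{c}\}$ with $T\subseteq F_{\vec{d}}$, and adding $\vec{d}$ removes at least the $t$ elements of $T$ from the uncovered set. Starting from the uncovered set $G$ of size $\le(w-j)t$, after at most $w-j$ such steps the uncovered set has size $<t$; call this remaining set $G''$ with $|G''|\le t-1$. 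If $G''\neq\emptyset$, pick one more element $(i_0,c_{i_0})\in G''\subseteq F_{\vec{c}}$ and, using $|\ma{C}|$-arbitrariness — actually more carefully, I would enlarge a $t$-subset of $F_{\vec{c}}$ containing $(i_0,c_{i_0})$ (padding with already-covered elements of $F_{\vec{c}}$) and again invoke $\mathcal{H}_t$ to get a codeword covering it. That is one extra codeword. Thus the total coalition has size at most $j+(w-j)+1=w+1$, which is one too many. To fix this off-by-one, I would instead do the padding trick from the start: at each of the $w-j$ steps, and noting $t\ge 1$, arrange that the final step also mops up the small leftover, so that $w-j$ codewords suffice to cover all of $G$ together with the $<t$ residue. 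Then the coalition $\{\vec{c}_1,\dots,\vec{c}_j,\vec{d}_1,\dots,\vec{d}_{w-j}\}$ has size $\le w$ and covers $F_{\vec{c}}$, giving $d(\vec{c},\cdot)=0$, contradiction; hence $|G|\ge(w-j)t+1$.

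The main obstacle is precisely this bookkeeping of the off-by-one: showing that $w-j$ applications of the $\mathcal{H}_t$ property (not $w-j+1$) suffice to cover an uncovered set of size at most $(w-j)t$. The resolution is that each application is guaranteed to cover a full $t$-subset of $F_{\vec{c}}$ — and we are always free to choose that $t$-subset to lie inside the currently uncovered part as long as it has $\ge t$ elements, and to pad with old elements otherwise — so $\lceil |G|/t\rceil\le w-j$ steps clear everything. One should also double-check the base/degenerate cases: if $j\ge w$ the claim reads $d(\vec{c},\cdot)\ge 1$, which is just the frameproof condition; and if $F_{\vec{c}}\subseteq\bigcup F_{\vec{c}_i}$ already then $G=\emptyset$, but that would already violate frameproofness for $j\le w$, so this cannot happen and the bound holds vacuously there. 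Writing up the greedy covering argument cleanly, with the padding convention stated once, is all that remains.
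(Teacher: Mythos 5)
Your proposal is correct and follows essentially the same route as the paper: assume $|F_{\vec{c}}\setminus\cup_{1\le i\le j}F_{\vec{c}_i}|\le (w-j)t$, cover the leftover by at most $w-j$ further codewords using the fact that every $t$-pattern of $\vec{c}\in\mathcal{H}_t$ lies in some $F_{\vec{d}}$ with $\vec{d}\neq\vec{c}$, and contradict the frameproof condition $d(\vec{c},D)>0$ for $|D|\le w$. The paper states the covering step tersely; your explicit $\lceil|G|/t\rceil\le w-j$ bookkeeping with the padding convention is just a careful spelling-out of the same argument.
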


\begin{proof} Assume not. Then
$|F_{\vec{c}}\setminus\cup_{1\leq i\leq j}F_{\vec{c}_i}|\leq(w-j)t$ holds
for some $\vec{c}_1,\ldots,\vec{c}_j\in\ma{C}\setminus\{\vec{c}\}$. Note that $\vec{c}\in\ma{H}_t$, which means any $t$-pattern of $F_{\vec{c}}$ is not an own $t$-pattern, that is, it belongs to some $F_{\vec{d}}$ with $\vec{d}\in\ma{C}\setminus\{\vec{c}\}$.
Then there exist
$w-j$ codewords $\vec{c}_{j+1},\ldots,\vec{c}_{w}\in\ma{C}\setminus\{\vec{c}\}$ such that $F_{\vec{c}}\setminus\cup_{1\leq i\leq j}F_{\vec{c}_i}\subseteq\cup_{j+1\leq i\leq w}F_{\vec{c}_i}$. It implies
$|F_{\vec{c}}\setminus\cup_{1\leq i\leq w}F_{\vec{c}_i}|=0$, which contradicts the definition of $w$-frameproof code.
\end{proof}

\section{The upper bound}
In this section we present our upper bound. For an $(N,n,q)$
$w$-frameproof code $\ma{C}$, we study the behavior of the code rate defined as $\alpha_{w,q}=\lim_{N\rightarrow\infty}\frac{1}{N}\log_q|\ma{C}|$. Given an alphabet size $q$, in \cite{BL03} the author
proved that $\alpha_{w,q}=\ma{O}(1/w)$. Here we show $\alpha_{w,q}=\ma{O}(\log w/w^2)$. Our bound is always better if $w\geq cq$ for some constant $c$. The following theorem is based on previous lemmas with a particular observation of $|\cup_{\vec{c}\in\ma{C}}F_{\vec{c}}|\leq qN$.

\begin{theorem}\label{upperbound} Suppose~$\mathcal{C}\subseteq\mathcal{A}^N$ is an
$(N,n,q)$ $w$-frameproof code. Then we have
$$|\mathcal{C}| \leq q^{{\lceil N(q-1)/\binom{w}{2}}\rceil \log_{q}{eq\binom{w}{2}/(q-1)}}+w.$$
\end{theorem}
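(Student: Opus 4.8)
The plan is to fix a parameter $t$, split $\mathcal{C}=\mathcal{C}_t\cup\mathcal{H}_t$ as in Section~2, bound $|\mathcal{C}_t|$ by Lemma~\ref{bdofct}, and bound $|\mathcal{H}_t|$ by playing the multiwise distance estimate of Lemma~\ref{mainlemma} against the global constraint $|\bigcup_{\vec{c}\in\mathcal{C}}F_{\vec{c}}|\le qN$. Concretely I would set $t=\lceil N(q-1)/\binom{w}{2}\rceil$, so that $\binom{w}{2}t\ge (q-1)N$.

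First I would show that, for this $t$, one has $|\mathcal{H}_t|\le w$. Suppose to the contrary that $\vec{d}_1,\ldots,\vec{d}_{w+1}$ are distinct codewords of $\mathcal{H}_t$, listed in an arbitrary order, and build the union $\bigcup_{i=1}^{w+1}F_{\vec{d}_i}$ by adjoining the sets one at a time. Starting from $|F_{\vec{d}_1}|=N$, when $F_{\vec{d}_j}$ is adjoined the running union is $\bigcup_{i<j}F_{\vec{d}_i}$, built from the $j-1$ codewords $\vec{d}_1,\ldots,\vec{d}_{j-1}$, so Lemma~\ref{mainlemma} (applicable since $1\le j-1\le w$ for $2\le j\le w+1$) gives $|F_{\vec{d}_j}\setminus\bigcup_{i<j}F_{\vec{d}_i}|\ge (w-j+1)t+1$. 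Summing these contributions, and using $\sum_{j=2}^{w+1}(w-j+1)=\binom{w}{2}$, yields $|\bigcup_{i=1}^{w+1}F_{\vec{d}_i}|\ge N+\binom{w}{2}t+w\ge qN+w$, which contradicts $|\bigcup_{i=1}^{w+1}F_{\vec{d}_i}|\le |\bigcup_{\vec{c}\in\mathcal{C}}F_{\vec{c}}|\le qN$. Hence $|\mathcal{H}_t|\le w$.

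Next, Lemma~\ref{bdofct} gives $|\mathcal{C}_t|\le \binom{N}{t}q^t$, and I would estimate the right-hand side with the standard bound $\binom{N}{t}\le (eN/t)^t$ together with $t\ge N(q-1)/\binom{w}{2}$, getting $\binom{N}{t}q^t\le (eNq/t)^t\le \bigl(eq\binom{w}{2}/(q-1)\bigr)^t=q^{\,t\log_q(eq\binom{w}{2}/(q-1))}$. Since $|\mathcal{C}|=|\mathcal{C}_t|+|\mathcal{H}_t|$, substituting $t=\lceil N(q-1)/\binom{w}{2}\rceil$ then reproduces exactly the asserted inequality.

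The binomial estimate and the arithmetic identity $\sum_{j=1}^{w-1}(w-j)=\binom{w}{2}$ are routine; the one point requiring care is the choice of $t$, which must be large enough for the union-counting step to collapse $\mathcal{H}_t$ to at most $w$ codewords (this is precisely the condition $\binom{w}{2}t\ge (q-1)N$, and the slack $+w$ gained from the last adjoined set is what pushes the strict inequality $qN+w>qN$ through even when $\binom{w}{2}\mid (q-1)N$), yet as small as possible to keep $\binom{N}{t}q^t$ under control — so $t=\lceil N(q-1)/\binom{w}{2}\rceil$ is essentially forced. Nothing beyond $w\ge 2$ is needed, and the degenerate case $t>N$ (where $\mathcal{C}_t=\emptyset$) causes no trouble since then $\binom{N}{t}=0\le (eN/t)^t$ still holds.
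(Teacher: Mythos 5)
Your proposal is correct and follows essentially the same route as the paper: the same choice $t=\lceil N(q-1)/\binom{w}{2}\rceil$, the same incremental union-counting via Lemma~\ref{mainlemma} against $|\bigcup_{\vec{c}\in\mathcal{C}}F_{\vec{c}}|\le qN$ to force $|\mathcal{H}_t|\le w$, and the same estimate $\binom{N}{t}q^t\le(eqN/t)^t\le q^{t\log_q(eq\binom{w}{2}/(q-1))}$ for $|\mathcal{C}_t|$. The only differences are cosmetic (index shift in the summation and your explicit remark on the degenerate case $t>N$).
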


\begin{proof}
Suppose $|\ma{H}_t|\geq w+1$. Then for distinct $\vec{c}_1,\ldots,\vec{c}_{w+1}\in\mathcal{H}_t$ and the corresponding $F_{\vec{c}_i}$, by Lemma \ref{mainlemma} one has
$|\cup_{1\leq i\leq w+1}F_{\vec{c}_i}|=
|F_{\vec{c}_1}|+|F_{\vec{c}_2}\setminus F_{\vec{c}_1}|+\cdots+|F_{\vec{c}_{w+1}}\setminus(F_{\vec{c}_1}\cup\cdots\cup F_{\vec{c}_w})|
\geq N+w+\Sigma^{w}_{j=1}(w-j)t=N+w+t\binom{w}{2}$. The right hand side of above formula exceeds $qN$ for $t>\frac{N(q-1)-w}{\binom{w}{2}}$, which violates the fact $|\cup_{1\leq i\leq w+1}F_{\vec{c}_i}|\leq|\cup_{\vec{c}\in\ma{C}}F_{\vec{c}}|\leq qN$. This implies $|\mathcal{H}_t|\leq w$ for those
$t>\frac{N(q-1)-w}{\binom{w}{2}}$.
Let $t=\lceil\frac{N(q-1)}{\binom{w}{2}}\rceil$. Then we get
$$|\mathcal{C}|=|\mathcal{C}_t|+|\mathcal{H}_t|$$
$$\leq|\mathcal{C}_t|+w\leq\binom{N}{t}q^t+w$$
$$=\binom{N}{\lceil\frac{N(q-1)}{\binom{w}{2}}\rceil}q^{\lceil\frac{N(q-1)}{\binom{w}{2}}\rceil}+w$$
$$<(\frac{eqN}{\lceil\frac{N(q-1)}{\binom{w}{2}}\rceil})^{\lceil\frac{N(q-1)}{\binom{w}{2}}\rceil}+w$$
$$\leq q^{\lceil\frac{N(q-1)}{\binom{w}{2}}\rceil\log_{q}{eq\binom{w}{2}/(q-1)}}+w$$
from Lemma \ref{bdofct} using $\binom{N}{t}<(eN/t)^t$.
\end{proof}

\begin{remark}
Compared with Theorem \ref{upperbound1}, our bound is an improvement when
$\lceil\frac{ N(q-1)}{\binom{w}{2}}\rceil\log_{q}{eq\binom{w}{2}/(q-1)}<\lceil N/w\rceil$.
For positive real numbers
$x$ and $y>1$, it can be verified that $\lceil\frac{x}{y}\rceil<\lceil x\rceil\frac{1}{y-1}$ if $x$ is sufficiently large, for example,
$x>y^2$. Thus it can be verified that our code rate is better than Theorem \ref{upperbound1} when $w\geq 14q$ with $q\geq 14$. For $2\leq q\leq 14$,
in Table 1 we list the conditions when our bound is better.
\end{remark}

\begin{table}[ht]
\centering
\caption{Given $q$, the minimal $w$ that our bound is better}

\begin{tabular}{|l|l|l|l|l|}
  \hline
  $q$ & 2 & 3 & 4 & 5 \\\hline
  min $w$ & 25 & 33 & 42 & 51 \\\hline
  $q$ & 6 & 7 & 8 & 9 \\\hline
  min $w$ & 51 & 60 & 68 & 77 \\\hline
  $q$ & 10 & 11 & 12 & 13 \\\hline
  min $w$ & 94 & 102 & 110 & 118 \\
  \hline
\end{tabular}
\end{table}
If the alphabet size $q$ is given, the previous known results often give an upper bound of code rate as $\alpha_{w,q}=\ma{O}(1/w)$. But our bound is$\alpha_{w,q}=\ma{O}(\log w/w^2)=\ma{O}(1/w^{2-\epsilon})$ where $\epsilon$ is some small quantity related to $w$. This difference is quite reasonable since in the literature there exits good constructions only when $q>N$, using some finite fields with size larger than $N$. If $q$ is relatively small, there exists no such good constructions and the explicit upper bound is still far from known. In Section~\ref{lower}, we will present a probabilistic construction with $\alpha_{w,q}=\Omega(1/w^2)$, which implies $c_1/w^2\leq\alpha_{w,q}\leq c_2/w^{2-\epsilon}$ for given $q$ and sufficiently large $w$, where $c_1$ and $c_2$ are positive real numbers.

\section{Tight bound on small code length}
In a recent paper of Guo, Stinson and Trung \cite{BN}, the authors considered the condition when a binary $w$-frameproof code of length $N$ can have more than $N$ codewords. They proved that if $\ma{C}$ is an $(N,n,2)$ $w$-frameproof code with $w+1\leq N\leq3w$ and $w\geq3$, then $n\leq N$. We will show $n\leq N$ still holds even when $N<\binom{w+1}{2}$.

We begin with some definitions which are originally from \cite{BN}. We can depict an $(N,n,q)$ code as an $N\times n$ matrix on $q$ symbols, where each column of the matrix corresponds to one of the codewords. This matrix is called the representation matrix of the code. Consider the representation matrix of any frameproof code. If we permute the entries in each row separately, i.e. a permutation of the elements $\{1,\ldots,q\}$, we get new frameproof codes which can be considered to be in the same equivalence class with the original one.
For binary code, we say it is in {\it standard form}
if every row of its representation matrix has at most $n/2$ entries of $1$.

Recall that $\ma{C}=\ma{C}_t\cup\ma{H}_t$. The following theorem is proved by considering the situation of $t=1$, in other words, the $1$-patterns.

\begin{theorem}\label{mybinarycase} Suppose $\ma{C}$ is an $(N,n,2)$ $w$-frameproof code with $w\geq 2$ and $N<\binom{w+1}{2}$.
Then it always holds that $n\leq N$. The equality holds if and only if the representation matrix of $\ma{C}$ in standard form is equivalent to a permutation matrix of order $N$.
\end{theorem}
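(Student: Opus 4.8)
The plan is to organize the argument around the decomposition $\mathcal{C}=\mathcal{C}_1\cup\mathcal{H}_1$ (the case $t=1$ of Section 2) together with the crude bound $|\cup_{\vec{c}\in\mathcal{C}}F_{\vec{c}}|\leq qN=2N$. The first thing I would establish is that the hypothesis $N<\binom{w+1}{2}$ is exactly what forces $|\mathcal{H}_1|\leq w$: if $\vec{c}_1,\ldots,\vec{c}_{w+1}$ were $w+1$ distinct codewords of $\mathcal{H}_1$, then telescoping $\cup_{i=1}^{w+1}F_{\vec{c}_i}$ as in the proof of Theorem \ref{upperbound} and applying Lemma \ref{mainlemma} with $t=1$ gives $|\cup_{i=1}^{w+1}F_{\vec{c}_i}|\geq N+\sum_{j=1}^{w}(w-j+1)=N+\binom{w+1}{2}>2N$, impossible. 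So $|\mathcal{H}_1|\leq w$ throughout.

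Next I would split on whether $\mathcal{H}_1$ is empty. In the case $\mathcal{H}_1=\emptyset$, so $\mathcal{C}=\mathcal{C}_1$, I would argue directly from $q=2$: for $\vec{c}\in\mathcal{C}$ let $R(\vec{c})\subseteq\{1,\ldots,N\}$ be the set of coordinates where $\vec{c}$ is the unique codeword carrying its symbol, which is precisely the set of coordinates giving an own $1$-pattern of $\vec{c}$; it is nonempty. When $n\geq3$ the sets $R(\vec{c})$ are pairwise disjoint, since two distinct codewords cannot both be the unique bearer of their symbol in one coordinate unless $n=2$; hence $n=|\mathcal{C}_1|\leq\sum_{\vec{c}}|R(\vec{c})|\leq N$, and the cases $n\leq2$ are immediate. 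In the case $\mathcal{H}_1\neq\emptyset$, I would fix $\vec{d}\in\mathcal{H}_1$ and list $\mathcal{C}\setminus\{\vec{d}\}$ as $\vec{e}_1,\ldots,\vec{e}_{n-1}$ with the (at most $w-1$) members of $\mathcal{H}_1\setminus\{\vec{d}\}$ first. Telescoping $(\cup_{\vec{c}\in\mathcal{C}}F_{\vec{c}})\setminus F_{\vec{d}}=\cup_{k}(F_{\vec{e}_k}\setminus F_{\vec{d}})$ along this order, the first increment is $|F_{\vec{e}_1}\setminus F_{\vec{d}}|=|F_{\vec{d}}\setminus F_{\vec{e}_1}|\geq w$ by Lemma \ref{mainlemma} applied to $\vec{d}\in\mathcal{H}_1$ with $j=1$, while each later increment is at least $1$, arising from an own $1$-pattern when $\vec{e}_k\in\mathcal{C}_1$ and from the $w$-frameproof property when $\vec{e}_k\in\mathcal{H}_1$ (such $\vec{e}_k$ sit in positions $\leq w-1$, so at most $w$ codewords are being deleted). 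Therefore $N\geq|(\cup_{\vec{c}\in\mathcal{C}}F_{\vec{c}})\setminus F_{\vec{d}}|\geq w+(n-2)$, so $n\leq N-w+2\leq N$ since $w\geq2$. In every case $n\leq N$.

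For the equality statement I would chase tightness under $n=N$. The branch $\mathcal{H}_1\neq\emptyset$ yields only $n\leq N-w+2$, so outside the degenerate window $w=2$, $N\leq2$, equality forces $\mathcal{H}_1=\emptyset$; then each $R(\vec{c})$ must be a singleton and the $R(\vec{c})$ partition $\{1,\ldots,N\}$, giving a bijection $\phi$ from rows to codewords with $R(\phi(i))=\{i\}$, i.e. in row $i$ only $\phi(i)$ carries one symbol and the other $N-1$ codewords carry the other. Since $N-1>N/2$ for $N\geq3$, the standard form forces the symbol of $\phi(i)$ in row $i$ to be $1$, so the representation matrix has a single $1$ in each row and, through $\phi$, a single $1$ in each column: it is a permutation matrix of order $N$. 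Conversely a permutation matrix of order $N$ is $w$-frameproof with $n=N$, because in the unique row where column $j$ equals $1$ every other column equals $0$.

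The hard part will be twofold. First, establishing $|\mathcal{H}_1|\leq w$ precisely at the threshold $N<\binom{w+1}{2}$ and then recognizing that the two cases need genuinely different tools: the ``subtract $F_{\vec{d}}$ for a $\vec{d}\in\mathcal{H}_1$'' telescoping is what supplies the extra $-w$ needed to beat $N$, but it is unavailable when $\mathcal{H}_1=\emptyset$, where only the binary count of own $1$-patterns delivers $n\leq N$. Second, the equality characterization, where tightness has to be propagated through the telescoping and then married to the standard-form normalization in order to read off a permutation matrix; that bookkeeping is the most delicate point.
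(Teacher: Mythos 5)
Your proposal is correct in substance and, while it reuses the paper's key mechanism, it replaces the paper's induction with a direct counting argument. The shared core is your first step: bounding $|\ma{H}_1|\leq w$ by telescoping $|\cup_{i=1}^{w+1}F_{\vec{c}_i}|\geq N+\binom{w+1}{2}>2N$ via Lemma~\ref{mainlemma} with $t=1$, which is exactly the paper's step 1. After that the routes diverge. The paper proves Theorem~\ref{mybinarycase} by induction on $N$: whenever $\ma{C}_1\neq\emptyset$ it normalizes the corresponding row to weight one, deletes that row and column to get an $(N-1,n-1,2)$ $w$-frameproof code, and iterates until either every row has at most one $1$ or the residual code equals its $\ma{H}_1$, which has at most $w$ codewords but length at least $w+1$ (Lemmas~\ref{1} and~\ref{2} handle the base case $N\leq w$). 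You avoid the induction entirely: when $\ma{H}_1=\emptyset$ you count the pairwise disjoint nonempty sets $R(\vec{c})$ of own coordinates (this is essentially Lemma~\ref{2}, argued slightly more carefully, since you notice the $n=2$ sharing issue the paper glosses over), and when $\ma{H}_1\neq\emptyset$ you run a second telescoping anchored at some $\vec{d}\in\ma{H}_1$, subtracting $F_{\vec{d}}$ and ordering the remaining $\ma{H}_1$-codewords first, to get the quantitatively stronger bound $n\leq N-w+2$. That refinement does not appear in the paper and is what makes your equality analysis clean: for $w\geq3$ it forces $\ma{H}_1=\emptyset$, after which the partition by singletons $R(\vec{c})$ plus the standard-form majority argument reads off the permutation matrix, whereas the paper extracts the same structure from its row-and-column peeling. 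The trade-off: the paper's induction produces the matrix normal form somewhat more directly, while your argument is non-inductive and gives extra information ($n\leq N-w+2$ whenever some codeword has no own coordinate). The degenerate windows you flag but do not resolve ($n\leq2$, and $w=2$ with $N\leq2$) are precisely where the paper's own Lemma~\ref{2} is equally loose (two codewords can share their distinguishing coordinate, e.g. the $(2,2,2)$ code with columns $(0,0)$ and $(0,1)$), so this is not a gap relative to the paper's proof.
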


In order to present our proof, we would like to give some lemmas first.

\begin{lemma} \label{1}Suppose $\ma{C}$ is an $(N,n,2)$ $w$-frameproof code with $N\leq w$. Then
it always holds that
$\ma{C}=\ma{C}_1$.
\end{lemma}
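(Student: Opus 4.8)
The plan is to show that when $N \leq w$, every codeword possesses an own $1$-pattern, i.e.\ $\ma{H}_1 = \emptyset$. Suppose for contradiction that some $\vec{c} \in \ma{H}_1$. By Lemma \ref{mainlemma} applied with $t = 1$ and $j = 1$, for any single codeword $\vec{c}_1 \in \ma{C} \setminus \{\vec{c}\}$ we have
$$d(\vec{c}, \{\vec{c}_1\}) = |F_{\vec{c}} \setminus F_{\vec{c}_1}| \geq (w - 1)\cdot 1 + 1 = w.$$
But in the binary case, $|F_{\vec{c}}| = N$ and $|F_{\vec{c}} \setminus F_{\vec{c}_1}|$ is just the Hamming distance between $\vec{c}$ and $\vec{c}_1$, which is at most $N$. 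Combining these gives $w \leq N$. Since we assumed $N \leq w$, this forces $N = w$ and $d(\vec{c}, \{\vec{c}_1\}) = N$, meaning $\vec{c}$ and $\vec{c}_1$ differ in every coordinate; equivalently $\vec{c}_1 = \overline{\vec{c}}$, the bitwise complement. As this must hold for \emph{every} $\vec{c}_1 \in \ma{C} \setminus \{\vec{c}\}$, we conclude $\ma{C} \setminus \{\vec{c}\}$ contains at most one codeword, so $n \leq 2$.

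It remains to rule out this last small case. If $n \leq 2$, then $\ma{C} = \{\vec{c}, \overline{\vec{c}}\}$ (or a single codeword, which trivially lies in $\ma{C}_1$). For $\ma{C} = \{\vec{c}, \overline{\vec{c}}\}$ with $N \geq 1$, pick any coordinate $i$; the $1$-pattern $\{(i, c_i)\}$ of $\vec{c}$ is not contained in $F_{\overline{\vec{c}}}$ since $\overline{\vec{c}}$ has the opposite symbol in position $i$. Hence it is an own $1$-pattern of $\vec{c}$, so $\vec{c} \in \ma{C}_1$; by symmetry $\overline{\vec{c}} \in \ma{C}_1$ as well. This contradicts $\vec{c} \in \ma{H}_1 = \ma{C} \setminus \ma{C}_1$, completing the contradiction. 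Therefore $\ma{H}_1 = \emptyset$ and $\ma{C} = \ma{C}_1$.

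The only mild subtlety—what I'd flag as the part needing care rather than a genuine obstacle—is correctly tracking the boundary case $N = w$: Lemma \ref{mainlemma} still applies there and yields the complement condition, but one must separately observe that a frameproof code cannot consist of a pair $\{\vec{c}, \overline{\vec{c}}\}$ with the claimed defect structure, which the direct $1$-pattern argument above handles cleanly. Everything else is an immediate consequence of Lemma \ref{mainlemma} together with the binary bound $|F_{\vec{c}} \setminus F_{\vec{c}_1}| \leq N$.
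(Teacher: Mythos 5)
Your proof is correct, but it takes a different route from the paper's. The paper argues directly: if $\vec{c}\in\ma{H}_1$, then for each coordinate $i$ there is some $\vec{d}_i\in\ma{C}\setminus\{\vec{c}\}$ with $d_i=c_i$, and since $N\leq w$ the set $D=\{\vec{d}_1,\ldots,\vec{d}_N\}$ is a coalition of size at most $w$ with $d(\vec{c},D)=0$, contradicting frameproofness --- one line, no case analysis, and in fact valid for any alphabet size $q$, not just $q=2$. You instead invoke Lemma \ref{mainlemma} with $j=1$, $t=1$ to get $d(\vec{c},\{\vec{c}_1\})\geq w$, which against the trivial bound $\leq N$ only yields $N\geq w$, forcing you to patch the boundary case $N=w$ with the binary complement structure and a separate own-$1$-pattern check. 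That patch is valid (and could even be shortened: once $F_{\vec{c}}$ is disjoint from every other $F_{\vec{c}_1}$, every $1$-pattern of $\vec{c}$ is own, so $\vec{c}\in\ma{C}_1$ immediately, with no need to pass through $n\leq 2$), but the detour is what costs you: the paper's argument is in effect the ``$j=0$'' version of Lemma \ref{mainlemma}, namely $N=d(\vec{c},\emptyset)\geq wt+1$ for $\vec{c}\in\ma{H}_t$, which at $t=1$ contradicts $N\leq w$ outright. So both proofs rest on the same covering idea, but the paper applies it at full strength while your use of the stated lemma (restricted to $j\geq1$) loses one unit and requires the extra, binary-specific boundary argument.
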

\begin{proof} The conclusion holds since if there is any $\vec{c}\in\ma{H}_1$, then for each
$1\leq i\leq N$ we can find some $\vec{d}_i\in\ma{C}\setminus\{\vec{c}\}$ such that $c_i=d_i$, implying $d({\vec{c},\{\vec{d}_1,...,\vec{d}_N\}})=0$.
\end{proof}

\begin{lemma}\label{2} Suppose $\ma{C}$ is an $(N,n,2)$ $w$-frameproof code with $\ma{C}=\ma{C}_1$. Then it always holds that $n\leq N$. The equality holds if and only if the representation matrix of $\ma{C}$ in standard form is equivalent to a permutation matrix of order $N$.
\end{lemma}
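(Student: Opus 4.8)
The plan is to reason entirely on the representation matrix $M$ of $\mathcal{C}$, taken in standard form, so that every row of $M$ has at most $n/2$ entries equal to $1$; identify a codeword with its column of $M$. Unwinding the definition of an own $1$-pattern, $(i,c_i)$ is an own $1$-pattern of a column $\vec c$ exactly when the symbol $c_i$ occurs in row $i$ only in column $\vec c$. The elementary observation that drives everything is: for $n\ge 3$, a symbol can be the unique occurrence of its value in its row only if it is a $1$ and that row contains exactly one $1$ — a $0$ would be unique only if its row had $n-1$ ones, which violates the standard-form bound $n-1\le n/2$. Hence, assuming $n\ge 3$, every own $1$-pattern of a codeword is a $1$ lying in a row containing a single $1$; call such a row the \emph{private row} of the (necessarily unique) codeword having that $1$, and note that distinct codewords have distinct private rows.

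First I would establish the inequality. Since $\mathcal{C}=\mathcal{C}_1$, every codeword possesses an own $1$-pattern and hence at least one private row; assigning to each codeword one of its private rows produces an injection $f\colon\mathcal{C}\to\{1,\dots,N\}$, so $n\le N$. The boundary cases $n\in\{1,2\}$ I would dispose of by hand; the bound is immediate there as soon as $N\ge 2$ (which may be assumed), so from now on one works with $n\ge 3$.

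Next, for the equality case, assume $n=N$. Then the injection $f$ above is a bijection from $\mathcal{C}$ onto $\{1,\dots,N\}$, so every row of $M$ arises as a private row and therefore contains exactly one $1$. Moreover, each column $\vec c$ has its unique $1$ in row $f(\vec c)$, while for every other row $j$ the single $1$ of row $j$ sits in column $f^{-1}(j)\neq\vec c$; hence column $\vec c$ also contains exactly one $1$, and $M$ is precisely the permutation matrix of $f$. Conversely, if the standard-form representation matrix of $\mathcal{C}$ is equivalent to a permutation matrix of order $N$, then it has exactly $N$ columns, i.e.\ $n=N$, which completes the characterization.

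The routine part is the inequality; the step I expect to require the most care is the equality analysis — specifically, promoting ``exactly one $1$ per row'' to ``exactly one $1$ per column'', which genuinely uses that $f$ is a bijection and not merely an injection, and dealing cleanly with the low-$n$ and low-$N$ corner cases (including the incidental fact that, in standard form, a matrix equivalent to a permutation matrix is already a permutation matrix once $N\ge 3$).
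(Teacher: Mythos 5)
Your proof is correct and takes essentially the same route as the paper: own $1$-patterns single out rows in which a codeword's entry is unique in its row, yielding an injection from codewords to rows (hence $n\le N$), with equality forcing exactly one $1$ per row and column, i.e.\ a permutation matrix. You are in fact somewhat more careful than the paper's terse proof (using the standard-form bound to force the unique entry to be a $1$, and spelling out why the bijection gives one $1$ per column), but the underlying argument is the same.
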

\begin{proof} If $\ma{C}=\ma{C}_1$, then for every codeword $\vec{c}\in\ma{C}$, there exists some $i$, $1\leq i\leq N$, such that $c_i\neq d_i$ for all $\vec{d} \in \ma{C} \setminus \{\vec{c}\}$. Consider the representation matrix of $\ma{C}$. Since each $c_i\in\{0,1\}$, then we can set this special $c_i$ to be $1$ and other entries in row $i$ to be $0$. Thus $n\leq N$ since there is at least one such row for each $\vec{c}\in\ma{C}_1$. The equality holds if and only if there is exactly one such row for each $\vec{c}\in\ma{C}_1$. The resulting representation matrix is a permutation matrix of order $N$.
\end{proof}

Now we can prove Theorem \ref{mybinarycase}.
\begin{proof}
The case $N\leq w$ follows from Lemma \ref{1} and Lemma \ref{2}.
If $N\geq w+1$, then we apply induction on $N$.  The proof is divided into three parts.
\begin{enumerate}
  \item If $\ma{C}_1=\emptyset$, then $\ma{C}=\ma{H}_1$.
  Assume that $|\ma{H}_1|\geq w+1$.
  Then we can choose $w+1$ distinct codewords
  $\vec{c}_1, \ldots, \vec{c}_{w+1}\in\ma{H}_1$.
  By setting $t=1$ in Lemma \ref{mainlemma}, we have
      $2N\geq |\cup_{1\leq i\leq w+1}F_{\vec{c}_i}|=
      |F_{\vec{c}_1}|+|F_{\vec{c}_2}\setminus F_{\vec{c}_1}|+\cdots+|F_{\vec{c}_{w+1}}\setminus(F_{\vec{c}_1}\cup\cdots\cup F_{\vec{c}_w})|
      \geq N+w+\Sigma^{w}_{j=1}(w-j)=N+w+\binom{w}{2}=N+\binom{w+1}{2}$,
      which violates the assumption $N<\binom{w+1}{2}$.
      Therefore, $|\ma{H}_1|\leq w$. Then the conclusion follows from $N\geq w+1$.
  \item If $\ma{C}_1\neq\emptyset$ and $\vec{c}\in\ma{C}_1$. Then there exists some $i$, $1\leq i\leq N$, such that $c_i\neq d_i$ for all $\vec{d}\in \ma{C} \setminus \{\vec{c}\}$.
  Consider the standard representation matrix of $\ma{C}$. Since $c_i\in\{0,1\}$, then there is exactly one
  $1$ in the $i$-th row of this matrix, which implies $c_i = 1$.
  By permuting the row entries, we can set the other entries of the column indexed by $\vec{c}$ to be zero. Let $i=1$ for simplicity. Then by above discussion, we actually get an equivalent code that has a codeword $\vec{c}=(1,0,\ldots,0)$. The representation matrix is of the following form,
      $$1~~~0~~~0~~\cdots\cdots0~~~0$$
      $$0~~*~~*~~\cdots\cdots*~~*$$
      $$\cdots\cdots$$
      $$0~~*~~*~~\cdots\cdots*~~*$$
      where * denotes a symbol in $\{0,1\}$. By deleting the first row and first column, we can get a new matrix which is the representation matrix of an $(N-1,n-1,2)$ $w$-frameproof code. Let us denote this code by $\ma{C}^{(1)}$.
   \item Repeat the procedures in step 2 for $\ma{C}^{(1)}$. If we always have $\ma{C}^{(i)}_1\neq\emptyset$ as $i$ grows, then we will end up with a matrix where any row contains at most one $1$, which implies that $n \leq N$, and the equality holds if and only if this matrix is the standard representation matrix of the original code $\ma{C}$. Otherwise, there is some $\ma{C}^{(i)}$ such that the procedure ends up with some $\ma{C}_1^{(i)} = \emptyset$, that is, $\ma{C}^{(i)}=\ma{H}^{(i)}_1$. By Lemma \ref{1} the code length of $\ma{C}^{(i)}$ is at least $w+1$. Note that $\ma{C}^{(i)}$ is an $(N-i,n-i,2)$ $w$-frameproof code, then the conclusion follows from the arguments in step 1.
\end{enumerate}
\end{proof}

Denote $N(w)$ the minimal $N$ such that there exits an $(N,n,2)$ $w$-frameproof code with $n>N$. It was proved that $N(w)\geq w+1$ in \cite{BL03}. In \cite{BN}, the authors actually verified $N(2)=3$ and $N(w)\geq 3w+1$ for $w\geq 3$. In this paper it is improved to $N(w)\geq\binom{w+1}{2}$ for all $w\geq2$. In the following we will present an example that gives rise to $N(w)\leq w^2+o(w^2)$.

\begin{definition} An affine plane is an incidence system of points and lines such that
\begin{itemize}
  \item (AP1) For any two distinct points, there is exactly one line through both points.
  \item (AP2) Given any line $l$ and any point $P$ not on $l$, there is exactly one line through $P$ that does not meet $l$.
  \item (AP3) There exist four points such that no three are collinear.
\end{itemize}
\end{definition}

For a detailed introduction of affine plane, the readers are referred to \cite{affineplane}. In an affine plane, any two lines have the same number of points, finite or infinite. The order of an affine plane is the number of points on any given line of the plane. If $\ma{P}$ is an affine plane of finite order $r$, then it is proved that every point of $\ma{P}$ lies on exactly $r+1$ lines and $\ma{P}$ has exactly $r^2$ points and $r^2+r$ lines. Let $M$ be the incidence matrix of $\ma{P}$. Then $M$ is an $r^2\times (r^2+r)$ matrix whose rows are indexed by points of $\ma{P}$ and columns are indexed by lines of $\ma{P}$. The entry $M(P,l)=1$ if point $P$ is on line $l$ and $M(P,l)=0$ otherwise. By property (AP1), any two columns of $M$ can agree with at most one 1 in their coordinates. Consider the binary frameproof code with representation matrix $M$. It is easy to show this code is an $(r^2,r^2+r,2)$ $(r-1)$-frameproof code. For every prime power, it is known that there exists an affine plane of this order. Let $q$ be the smallest prime power no less than $w+1$. Then the existence of an affine plane of order $q$ gives rise to a $(q^2,q^2+q,2)$ $w$-frameproof code. Together with Theorem \ref{mybinarycase} we have the following corollary

\begin{corollary} Let $N(w)$ be the minimal $N$ such that there exists an $(N,n,2)$ $w$-frameproof code with $n>N$. Then
$$(\frac{1}{2}+o(1))w^2\leq N(w)\leq(1+o(1))w^2.$$
\end{corollary}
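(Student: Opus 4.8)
The plan is to prove the two inequalities independently. The lower bound is immediate from what has already been established: by Theorem \ref{mybinarycase}, any $(N,n,2)$ $w$-frameproof code with $N<\binom{w+1}{2}$ satisfies $n\leq N$, so no such code can have $n>N$; hence $N(w)\geq\binom{w+1}{2}=\frac{w(w+1)}{2}=(\frac12+o(1))w^2$. There is nothing more to do for this direction.

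For the upper bound, the main step is to verify that the incidence matrix $M$ of an affine plane $\ma{P}$ of finite order $r$ is the representation matrix of an $(r^2,r^2+r,2)$ $(r-1)$-frameproof code. The parameters $r^2$ (number of points, hence rows) and $r^2+r$ (number of lines, hence columns) are standard consequences of (AP1)--(AP3), so the only thing requiring argument is the frameproof property, which I would phrase via Definition \ref{newdefinition} and Lemma \ref{distanceandset}. Fix a line $l$, viewed as a codeword $\vec{c}$, and a set $D$ of at most $r-1$ other lines. Every line of $\ma{P}$ contains exactly $r$ points, and by (AP1) each line distinct from $l$ meets $l$ in at most one point; therefore at most $|D|\leq r-1$ of the $r$ points of $l$ lie on a line of $D$, so some point $P\in l$ lies on no line of $D$. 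The coordinate of $M$ indexed by $P$ is then $1$ for $\vec{c}$ and $0$ for every codeword of $D$, so $d(\vec{c},D)\geq 1$. This shows the code is $(r-1)$-frameproof, hence $w$-frameproof whenever $w\leq r-1$, and since $r^2+r>r^2$ it has strictly more codewords than its length. Consequently $N(w)\leq r^2$ for every $r\geq w+1$ that is the order of some affine plane.

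To finish, I would take $r$ to be the smallest prime power with $r\geq w+1$, which guarantees that $AG(2,r)$ exists, so that $N(w)\leq r^2$. The only genuinely nontrivial input is then the bound $r=(1+o(1))w$: this follows from any prime-gap estimate producing a prime in $[m,m+o(m)]$, for instance the classical consequence of the prime number theorem (Bertrand's postulate alone would only give $r<2(w+1)$, hence the weaker $N(w)\leq(4+o(1))w^2$, which is not enough for the stated constant). With $r=(1+o(1))w$ we obtain $N(w)\leq r^2=(1+o(1))w^2$, as claimed. I do not expect a real obstacle here; the only points needing care are the frameproof verification from the incidence structure and the choice of a prime-gap bound strong enough to make the leading constant equal to $1$.
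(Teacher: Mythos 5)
Your proposal is correct and follows essentially the same route as the paper: the lower bound is read off from Theorem \ref{mybinarycase}, and the upper bound uses the incidence matrix of an affine plane of the smallest prime power order $r\geq w+1$, giving an $(r^2,r^2+r,2)$ $w$-frameproof code with $r=(1+o(1))w$. You merely spell out two details the paper leaves implicit (the frameproof verification via (AP1) and the prime-gap input needed for the constant $1$), and both are handled correctly.
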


\section{The lower bound}
\label{lower}

In this section, we use the standard probabilistic method to give an existence result for frameproof codes. The technique we employ is commonly termed the deletion method.

\begin{theorem}\label{low}
If $q\leq w+1$, then there exists an
$(N,n,q)$
$w$-frameproof code with cardinality at least
  $$\frac{1}{2^{\frac{w+1}{w}}}(\frac{1}{1-(1-\frac{q-1}{w+1})(\frac{q-1}{w+1})^w-\frac{q-1}{w+1}(\frac{w}{w+1})^w})^{\frac{N}{w}}.$$
\end{theorem}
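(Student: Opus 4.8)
The plan is to use the deletion method on a random code built by choosing $n$ codewords independently and uniformly from $\ma{A}^N$. First I would fix a target value $n$ (to be optimized at the end) and let $\vec{c}_1,\ldots,\vec{c}_n$ be i.i.d.\ uniform points of $\ma{A}^N$. I would then bound the expected number of ``bad configurations'' that obstruct the $w$-frameproof property. By Definition~\ref{newdefinition}, the code fails exactly when some ordered tuple $(\vec{c};\vec{c}_1,\ldots,\vec{c}_j)$ with $j\leq w$, $\vec{c}\notin\{\vec{c}_1,\ldots,\vec{c}_j\}$ satisfies $d(\vec{c},\{\vec{c}_1,\ldots,\vec{c}_j\})=0$; it suffices to consider $j=w$ (pad with repeats if fewer are needed, or equivalently bound using exactly $w$ framing codewords since adding codewords only helps the coalition). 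So the key quantity is $p := \Pr[d(\vec{c},\{\vec{c}_1,\ldots,\vec{c}_w\})=0]$ for $w+1$ fixed distinct random codewords, and the expected number of bad $(w+1)$-subsets (times the number of ways to single out the ``framed'' one) is at most $\binom{n}{w+1}(w+1)\,p \le \frac{n^{w+1}}{w!}p$. Deleting one codeword from each bad configuration leaves a genuine $w$-frameproof code of size at least $n - \frac{n^{w+1}}{w!}p$; optimizing $n$ over this expression (the optimum is near $n = (w!/( (w+1)p))^{1/w}$) yields a surviving code of size on the order of $(w!)^{1/w}\big/\big( (w+1)^{1/w} p^{1/w}\big)$, and $(w!)^{1/w}(w+1)^{-1/w}$ is what produces the constant $2^{-(w+1)/w}$ after a crude bound like $w! \ge (w/e)^w$ is replaced by the cleaner estimate $w!/(w+1) \ge \dots$; in any case the leading behaviour is $p^{-N/w}$ up to the stated constant factor.

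The heart of the argument is therefore computing $p$ exactly. Since the $N$ coordinates are independent and the event $d(\vec{c},\{\vec{c}_1,\ldots,\vec{c}_w\})=0$ is the intersection over all $N$ coordinates of the event ``$c_i \in \{(\vec{c}_1)_i,\ldots,(\vec{c}_w)_i\}$'', we get $p = \rho^N$ where $\rho$ is the single-coordinate probability: pick $w+1$ i.i.d.\ uniform symbols $a, b_1,\ldots,b_w$ from an alphabet of size $q$, and $\rho = \Pr[a \in \{b_1,\ldots,b_w\}]$. Condition on the symbol $a$: $\rho = \Pr[\text{at least one } b_k = a] = 1 - \left(\frac{q-1}{q}\right)^w$. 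Hmm — but the claimed bound has $w+1$ in the denominators, not $q$, which signals that the intended computation is \emph{not} the naive one; the authors must be exploiting that $q \le w+1$ by a smarter coupling, presumably sampling each codeword not uniformly from $\ma{A}^N$ but coordinate-wise from a cleverly chosen distribution on $\{1,\ldots,q\}$ (for instance, a distribution supported so that the ``collision'' probabilities come out as functions of $q/(w+1)$). So the actual first step I would take is to introduce a biased product measure: in each coordinate, choose the symbol according to some distribution $\pi$ on $[q]$ to be optimized, compute $\rho(\pi) = \sum_{a} \pi(a)\big(1 - (1-\pi(a))^w\big)\cdot(\text{correction})$ — actually, reading the target expression, $\rho = 1 - (1-\tfrac{q-1}{w+1})(\tfrac{q-1}{w+1})^w - \tfrac{q-1}{w+1}(\tfrac{w}{w+1})^w$, which has the shape of a two-point-type computation. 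I would reverse-engineer $\pi$ from this: it looks like the contribution splits according to whether the framed symbol $a$ equals a distinguished value. The main work is to identify $\pi$ and verify this algebraic identity for $\rho$.

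The main obstacle, then, is pinning down the precise sampling distribution and carrying out the single-coordinate probability computation so that it matches the stated closed form; once $\rho$ is in hand, the deletion-method bookkeeping is routine (expectation of bad tuples $\le \frac{n^{w+1}}{w!}\rho^N$, delete, optimize $n$). A secondary technical point is the $j < w$ cases: I would handle these by noting that if a coalition of size $j<w$ can frame $\vec{c}$, then so can any coalition of size $w$ containing it, so it is enough to forbid the $j=w$ obstructions — but I should double-check this reduction is valid under the biased measure (it is, since it is a purely set-theoretic statement about descendants, independent of the distribution). Finally I would track the constant carefully: the factor $2^{-(w+1)/w}$ comes from bounding $\big(\frac{(w+1)}{w!}\big)^{1/w}$ against $2^{(w+1)/w}$ (which holds since $w! \ge 2^{w-1}$ and $(w+1) \le 2^w$ for the relevant range, or by a direct estimate), and from the slack $n - \frac{n^{w+1}}{w!}\rho^N \ge \frac{1}{2}\cdot\big(\text{optimal }n\big)\cdot(\text{something})$ — I would present the optimization cleanly rather than grinding, since the exponent $N/w$ on $\rho^{-1}$ is the essential content and the constant is secondary.
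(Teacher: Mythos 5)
Your overall strategy matches the paper's: a random code generated coordinate-wise from a product distribution, a first-moment count of bad $(w+1)$-tuples, and deletion of one codeword per violation (your reduction of coalitions of size $j<w$ to size exactly $w$ is also the implicit step in the paper). However, the proposal stops short of the one genuinely nontrivial ingredient of the theorem: you never specify the biased distribution or verify that the per-coordinate probability has the stated closed form — you explicitly defer this ("reverse-engineer $\pi$", "the main obstacle"). That computation is the heart of the proof. The paper's choice is: in each coordinate set $\Pr[c_i=j]=\mu=\frac{1}{w+1}$ for each of $q-1$ symbols and $\Pr[c_i=q]=\lambda=1-\frac{q-1}{w+1}$ (this is where the hypothesis $q\leq w+1$ enters, ensuring $\lambda\geq\mu>0$; it comes from maximizing $\sum_j p_j(1-p_j)^w$ subject to $\sum_j p_j=1$, the unconstrained maximizer of $p(1-p)^w$ being $p=\frac{1}{w+1}$). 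Then for a fixed coalition $D$ of size $w$, independence across coordinates gives $\Pr[c_i\in desc_i(D)]=\lambda\bigl(1-(1-\lambda)^w\bigr)+(q-1)\mu\bigl(1-(1-\mu)^w\bigr)$, and using $\lambda+(q-1)\mu=1$ this simplifies to exactly $1-(1-\frac{q-1}{w+1})(\frac{q-1}{w+1})^w-\frac{q-1}{w+1}(\frac{w}{w+1})^w$, whose $N$-th power is the failure probability $P^N$. Without exhibiting this distribution and identity, the proposal proves only the general template, not the specific bound.

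A secondary point: your account of the constant $2^{-(w+1)/w}$ does not reflect where it actually comes from. In the paper one samples $2M$ codewords, bounds the expected number of violating pairs by $\binom{2M}{1}\binom{2M-1}{w}P^N<(2M)^{w+1}P^N$, and requires this to be less than $M$, which is exactly the condition $M<2^{-(w+1)/w}P^{-N/w}$; deleting at most $M$ codewords leaves $M$ survivors. Your alternative bookkeeping (count $\binom{n}{(w+1)}(w+1)p\leq \frac{n^{w+1}}{w!}p$ and optimize $n$) would in fact yield a slightly better constant, so it is not an error, but the stated attribution of $2^{-(w+1)/w}$ to bounds like $w!\geq 2^{w-1}$ is not how that factor arises, and in a finished write-up you would need to carry out the optimization explicitly to recover (at least) the claimed constant.
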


\begin{proof}
Assume
$\ma{A} = \{1, \ldots, q\}$
is an alphabet of size $q$. Then choose $2M$ vectors in $\mathcal{A}^N$ independently with repetitions.
Each vector $\vec{c}=(c_1,\ldots,c_N)$ is generated as follows. For every $1\leq i\leq N$, we set $Pr[c_i=q]=\lambda$ and $Pr[c_i=j]=\mu$ for each $1\leq j\leq q-1$.
Obviously
it holds that $\lambda+(q-1)\mu=1$.
The values $M$, $\lambda$ and $\mu$ will be determined later.
Denote the obtained random family by $\mathcal{C}_0$. For some $\vec{c}\in\mathcal{C}_0$ and arbitrary
$D\subseteq\mathcal{C}_0 \setminus \{\vec{c}\}$
with $|D|=w$, we compute the probability that $\vec{c}$ and $D$ violate the condition of $w$-frameproof code, i.e, $Pr[d(\vec{c},D)=0]$. We have
$$Pr[d(\vec{c},D)=0]=\prod_{1=1}^{N}Pr[c_i\in desc_i(D)],$$
where $$Pr[c_i\in desc_i(D)]=\lambda(1-(1-\lambda)^w)+(q-1)\mu(1-(1-\mu)^w)$$
holds for every $1\leq i\leq N$. Denote $Pr[c_i\in desc_i(D)]:=P(\lambda,\mu)$ for convenience. Then
$Pr[d(\vec{c},D)=0]=(P(\lambda,\mu))^N$.
Thus for
      $M<2^{-\frac{w+1}{w}}P(\lambda,\mu)^{-\frac{N}{w}}$,
      the expected number of pairs $\vec{c}$ and $D\subseteq\mathcal{C}_0 \setminus \{\vec{c}\}$ violating the property of $w$-frameproof code is bounded by
      $$\leq\binom{2M}{1}\binom{2M-1}{w}Pr[d(\vec{c},D)=0] $$
      $$ < (2M)^{w+1}(P(\lambda,\mu))^N < M,$$
      that is, there exists a code $\mathcal{C}_0$ of cardinality $|\mathcal{C}_0| = 2M$ with at most $M$ pairs $\vec{c}$ and $D\subseteq\mathcal{C}_0 \setminus \{\vec{c}\}$, $|D|=w$, violating the property of $w$-frameproof code.

Fix such a code $\mathcal{C}_0$ and for each of the pairs $(\vec{c}, D)$ delete the inadmissible vector $\vec{c}$. Denote the resulting code by $\mathcal{C}$. Then the expected number of the remaining vectors in $\mathcal{C}$ is greater than $2M-M=M$, and the vectors in $\mathcal{C}$ satisfy the condition of $w$-frameproof code.
It concludes that we have shown the existence of a $w$-frameproof code $\mathcal{C}$ with at least $M$ codewords.
Then the theorem follows by setting $\lambda=1-\frac{q-1}{w+1}$ and $\mu=\frac{1}{w+1}$.
\end{proof}

\begin{remark}
Assume $p_j$ is the probability that $Pr[c_i=j]=p_j$ for $1\leq 1\leq N$ and each $j\in\{1,\ldots,q\}$. The core of our construction is in fact the optimization
$$\max~~\sum_{j=1}^q p_j(1-p_j)^w$$
$$s.t.~~~~\sum_{j=1}^q p_j=1.$$
In \cite{ST08}, they just choose $p_1=p_2=\cdots=p_q=\frac{1}{q}$. Our choice is better when $q\leq w+1$.
\end{remark}

\begin{remark}
Compared with the bound in Theorem \ref{generallowerbd}, it is easy to see our bound on code rate is an improvement if
$(1-\frac{q-1}{w+1})(\frac{q-1}{w+1})^w+\frac{q-1}{w+1}(\frac{w}{w+1})^w>(\frac{q-1}{q})^w$. One can verify that the above inequality holds when $q\leq\frac{w}{2}+1$ if $w\geq 8$ (see Appendix). For small $w$ and $q$, in Table 2 we list the conditions when the inequality is satisfied.
\end{remark}

\begin{table}[ht]
\centering
\caption{Given $q$, the minimal $w$ that our bound is better}

\begin{tabular}{|l|l|l|l|}
\hline
  $q$ & 2 & 3 & 4  \\\hline
  min $w$ & 5 & 7 & 8  \\\hline
  $q$ & 5 & 40 & 41  \\\hline
  min $w$ & 8 & 49 & 50  \\\hline
\end{tabular}
\end{table}

Last two values indicate that the ratio of $q/w$ may approximate 1 as $q$ grows. We choose $q/w=1/2$ just for the sake of easy computing. By Taylor expansion of the function $\ln(1+x)$ one can show the rate of our lower bound is
$\Omega(1/w^2)$.
There is still a gap against the upper bound $\ma{O}(\log w/w^2)$ determined in Theorem \ref{upperbound}. To see how better our bound is than \cite{ST08} when $q$ is fixed and $w$ grows large, we just present both bounds for $q=2$, which are $\Omega(1/w2^w)$ and $\Omega(1/w^2)$ respectively.

\section{Conclusions}

In this paper we investigate the bounds of frameproof codes. To prove the upper bound we use a pure combinatorial approach. For the lower bound, we use a probabilistic method. Our main ideas are from the distance implied in the structure of frameproof codes. Compared with previous bounds, it is easy to find out that the cardinality of frameproof codes differs widely according to the numerical relationship of frameproof property $w$, alphabet size $q$ and code length $N$. For large $q$, the authors of \cite{BL03} gave several good constructions that fit in the upper bound. But it is not the case for small
$q$, where the gap between the lower and the upper bound is still huge. It is nice if one can construct frameproof codes on small alphabet with good code rate.
For binary $w$-frameproof code of length $N$, people pay particular attention to the extremal situation that one can find a code having at least $N+1$ codewords. In this paper we show the necessary condition is $N\geq\binom{w+1}{2}$. And it is showed that the sufficient condition is $N=w^2+o(w^2)$. Any improvement of this result may be interesting.

\begin{appendix}
We want to show $(1-\frac{q-1}{w+1})(\frac{q-1}{w+1})^w+\frac{q-1}{w+1}(\frac{w}{w+1})^w>(\frac{q-1}{q})^w$ when $q\leq w/2+1$ and $w\geq 8$.
It suffices to show $\frac{q-1}{w+1}(\frac{w}{w+1})^w>(\frac{q-1}{q})^w,$
which is equivalent to $\frac{w^w}{(w+1)^{w+1}}>\frac{(q-1)^{w-1}}{q^w}.$
Note that $f(x)=\frac{(x-1)^{w-1}}{x^w}$ is monotonically increasing when $1<x<w$.
Therefore,
$\frac{(q-1)^{w-1}}{q^w} < \frac{(w/2)^{w-1}}{(w/2+1)^w}.$
It suffices to show $\frac{w^w}{(w+1)^{w+1}}>\frac{2w^{w-1}}{(w+2)^w}$,
which is equivalent to $(1+\frac{1}{w+1})^{w+1}>\frac{2(w+2)}{w}$.
Note that the left side of the last inequality is monotonically increasing and right side is monotonically decreasing.
By direct computation one can show the inequality holds if $w\geq8$.
\end{appendix}
\bibliographystyle{IEEEtranS}
\bibliography{Reference}

\end{document}